
\NeedsTeXFormat{LaTeX2e}

\documentclass[runningheads,a4paper]{llncs}

\usepackage{graphicx}

\usepackage{amsmath, amssymb}
\usepackage[english]{babel}
\usepackage{xspace, url, verbatim, array, float, color}
\usepackage[mathscr]{eucal}
\usepackage{cite}
\usepackage{enumerate}
\usepackage{calc}

\def\cC{\mathcal{C}} \def\cP{\mathcal{P}}

\newcommand\abs[1]{\left\lvert#1\right\rvert}

\newcommand\Rset{\mathbb{R}}

\def\phi{\varphi}

\newcommand\dRF{\ensuremath{d_{\text{RF}}}\xspace}
\newcommand\dGRF{\ensuremath{d_{\text{GRF}}}\xspace}
\newcommand\dJRFk{\ensuremath{d_{\text{JRF}}^{(k)}}\xspace}
\newcommand\dJRFone{\ensuremath{d_{\text{JRF}}^{(1)}}\xspace}

\newcommand\etal{\emph{et~al.\@}\xspace}

\DeclareMathOperator{\symmdiff}{\triangle}


\begin{document}

\title{The generalized Robinson-Foulds metric}

\author{Sebastian B\"ocker\thanks{Equal contribution.} \and Stefan
  Canzar$^{\star}$ \and Gunnar W.\ Klau$^{\star}$}

\institute{Chair for Bioinformatics, Friedrich Schiller University Jena,
  Germany, \email{sebastian.boecker@uni-jena.de} \and 
  Center for Computational Biology, McKusick-Nathans Institute of Genetic
  Medicine, Johns Hopkins University School of Medicine, Baltimore, Maryland,
  USA, \email{canzar@jhu.edu} \and Life Sciences Group, Centrum Wiskunde \&
  Informatica, Amsterdam, The Netherlands, \email{gunnar.klau@cwi.nl}}

\date{\today}

\maketitle


\begin{abstract}
  The Robinson-Foulds (RF) metric is arguably the most widely used measure of
  phylogenetic tree similarity, despite its well-known shortcomings: For
  example, moving a single taxon in a tree can result in a tree that has
  maximum distance to the original one; but the two trees are identical if we
  remove the single taxon. To this end, we propose a natural extension of
  the RF metric that does not simply count \emph{identical} clades but
  instead, also takes \emph{similar} clades into consideration. In contrast to
  previous approaches, our model requires the matching between clades to
  respect the structure of the two trees, a property that the classical RF
  metric exhibits, too. We show that computing this generalized RF metric is,
  unfortunately, NP-hard.  We then present a simple Integer Linear Program
  for its computation, and evaluate it by an all-against-all comparison of 100
  trees from a benchmark data set.  We find that matchings that respect the
  tree structure differ significantly from those that do not, underlining
  the importance of this natural condition.
\end{abstract}


\section{Introduction}

In 1981, Robinson and Foulds introduced an intriguingly simple yet
intuitively well-motivated metric, which is nowadays known as
\emph{Robinson-Foulds (RF) metric}~\cite{robinson81comparison}.  Given two
phylogenetic trees, this metric counts the number of splits or clades induced
by one of the trees but not the other.  The RF metric is highly conservative,
as only perfectly conserved splits or clades do not count towards the
distance.  The degree of conservation between any pair of clades that is not
perfectly conserved, does not change the RF distance.  See
Fig.~\ref{fig:two-trees} for an example of two trees that are structurally
similar but have maximum RF distance.

Other measures for comparing phylogenetic trees do capture that the trees in
Fig.~\ref{fig:two-trees} are structurally similar: The Maximum Agreement
Subtree (MAST) score~\cite{finden85obtaining, kao01even} of the two trees is
$9$, where $10$ is the highest possible score of two trees with $10$ leaves.
Secondly, the triplet distance counts the number of induced triplet trees on
three taxa that are not shared by the two trees~\cite{bansal11comparing,
  critchlow96triples}.  Both measures are less frequently applied than the RF
metric, and one may argue that this is due to certain ``issues'' of these
measures: For example, if the trees contain (soft) polytomies or arbitrarily
resolved polytomies, then we may have to exclude large parts of the trees
from the MAST due to a single polytomy.  Lastly, there are
distance measures based on the number of branch-swapping operations to
transform one tree into another; many of these measures are computationally
hard to compute~\cite{allen01subtree}.  Such tree modifications are routinely
used in local search optimization procedures, but rarely to compute distances
in practice.

\begin{figure}[tb]
\centering
\includegraphics[width=0.9\textwidth]{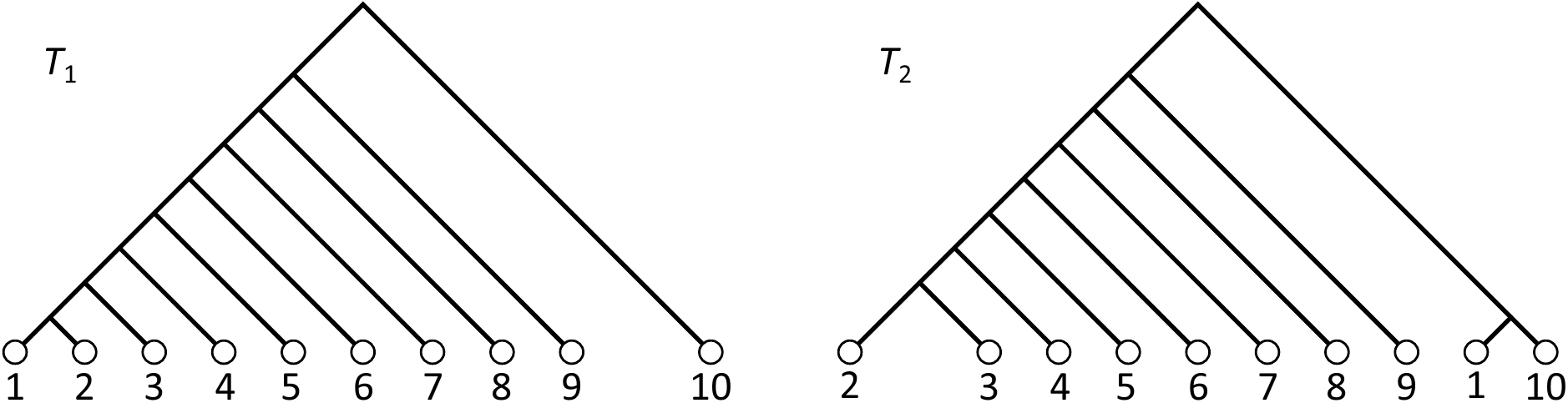}
\caption{Two rooted phylogenetic trees.  Despite their high similarity, the
  RF distance of these two trees is $16$, the maximum distance of two rooted
  trees with ten leaves.}
\label{fig:two-trees}
\end{figure}

From an applied view, the comparison of two phylogenetic trees with identical
taxa set has been frequently addressed in the literature~\cite{nye06novel,
  munzner03treejuxtaposer, griebel08epos}.  This is of interest for comparing
phylogenetic trees computed using different methods, output trees of an
(MC)MCMC method, or host-parasite comparisons.  Mutzner
\etal~\cite{munzner03treejuxtaposer} introduced the ``best corresponding
node'' concept which, unfortunately, is not symmetric: Node $a$ in the first
tree may correspond to node $b$ in the second, whereas $b$ corresponds to a
different node $c$ in the first tree, and so on.  Nye \etal~\cite{nye06novel}
suggested to compute a matching between the inner nodes of the two trees,
thereby enforcing symmetry.  Later, Bogdanowicz~\cite{bogdanowicz08comparing}
and, independently, Lin \etal~\cite{YuLin:2012dj} proposed to use these
matchings to introduce a ``generalized'' version of the RF distance, see
also~\cite{bogdanowicz12matching}.  Using matchings for comparing trees as
part of MAST computations, was pioneered by Kao \etal~\cite{kao01even}.

Here, we present a straightforward generalization of the RF distance that
allows us to relax its highly conservative behavior.  At the same time, we
can make this distance ``arbitrarily similar'' to the original RF distance.
Unfortunately, computing this new distance is NP-hard, as we will show in
Section~\ref{sec:complexity}.  Our work generalizes and formalizes that of
Nye \etal~\cite{nye06novel}: Their clade matching does not respect the
structure of the two trees, see Fig.~\ref{fig:two-trees} and below.  As a
consequence, the matching distances from~\cite{bogdanowicz08comparing,
  YuLin:2012dj} are no proper generalization of the RF distance: These
distances treat the two input trees as collections of (unrelated) clades but
\emph{ignore the tree topologies}.  In contrast, the RF distance does respect
tree topologies, and so does our generalization.

In the following, we will concentrate on rooted phylogenetic trees.


\section{The Generalized Robinson-Foulds distance}

Let $T = (V,E)$ be a rooted phylogenetic tree over the set of taxa~$X$: That
is, the leaves of $T$ are (labeled~by) the taxa~$X$.  We assume that $T$ is
arboreal, so all edges of $T$ are pointing away from the root.  In the
following, we assume that any tree is an arboreal, rooted phylogenetic tree,
unless stated otherwise.  A set $Y \subseteq X$ is a \emph{clade} of $T$ if
there exists some vertex $v \in V$ such that $Y$ is the set of leaves
below~$v$.  We call $Y$ \emph{trivial} if $\abs{Y} = 1$ or $Y = X$.  Since
the trivial clades are identical for any two trees with taxa set~$X$, we will
restrict ourselves to the set $\cC(T)$ of non-trivial clades of~$T$.  Let
$\cP(X)$ be the set of subsets of~$X$.

Let $T_1,T_2$ be two phylogenetic trees over the set of taxa~$X$, and let
$\cC_j := \cC(T_j)$ for $j=1,2$ be the corresponding sets of non-trivial
clades.  The original RF distance counts zero whenever we can find a clade in
both trees, and one if we find it in exactly one tree.  We want to relax this
by computing a matching between the clades of the two trees, and by assigning
a cost function that measures the dissimilarity between the matched
clades.
To this end, we define a \emph{cost function}
\begin{equation} \label{equ:delta}
  \delta: \bigl( \cP(X) \cup \{-\} \bigr) \times \bigl( \cP(X) \cup \{-\}
  \bigr) \to \Rset_{\ge 0} \cup \{\infty\}\enspace.
\end{equation}
Now, $\delta(Y_1,Y_2)$ measures the dissimilarity of two arbitrary clades
$Y_1,Y_2 \subseteq X$.  The symbol `$-$' is the gap symbol, and we define
$\delta(Y_1,-) > 0$ to be the cost of leaving some clade $Y_1$ of the first
tree without a counterpart in the second tree; analogously, we
define~$\delta(-,Y_2) > 0$.

\subsection{Matchings and arboreal matchings}\label{sec:arboreal_match}

Let $m \subseteq \cC_1 \times \cC_2$ be a \emph{matching} between $\cC_1$
and~$\cC_2$: That is, $(Y_1,Y_2),(Y_1',Y_2) \in m$ implies $Y_1 = Y_1'$, and
$(Y_1,Y_2),(Y_1,Y_2') \in m$ implies $Y_2 = Y_2'$.  We say that $Y_1 \in
\cC_1$ (or $Y_2 \in \cC_2$) is \emph{unmatched} if there is no $(Y_1',Y_2')
\in m$ with $Y_1 = Y_1'$ (or $Y_2 = Y_2'$, respectively).  We define the
\emph{cost} $d(m)$ of the matching $m$ as:
\begin{equation} \label{equ:cost-matching}
  d(m) := \sum_{(Y_1,Y_2) \in m} \delta(Y_1,Y_2)
  + \sum_{\substack{Y_1 \in \cC_1 \\ \text{$Y_1$ unmatched}}} \!\!
  \delta(Y_1,-)
  + \sum_{\substack{Y_2 \in \cC_2 \\ \text{$Y_2$ unmatched}}} \!\!
  \delta(-,Y_2)
\end{equation}

Now, we could define a generalization of the Robinson-Foulds distance between
$T_1,T_2$ (with respect to~$\delta$) to be the minimum cost of any matching
between $\cC_1$ and~$\cC_2$.  One can easily see that for $\delta(Y,Y)=0$,
$\delta(Y,Y') = \infty$ for $Y \ne Y'$, and $\delta(Y,-) = \delta(-,Y) = 1$
we reach the original RF distance.

How can we compute a matching of minimum cost?  This is actually
straightforward: We define a complete bipartite graph $G$ with vertex set
$\cC_1 \cup \cC_2$, and for any pair $C_1 \in \cC_1$, $C_2 \in \cC_2$ we
define the weight of the edge $(C_1,C_2)$ as $w(C_1,C_2) := \delta(C_1,-) +
\delta (-,C_2) - \delta(C_1,C_2)$.  Now, finding a matching with minimum cost
corresponds to finding a maximum matching in~$G$.  In case $\delta$ is a
metric, all edges in $G$ have non-negative weight.

Unfortunately, finding a minimum cost matching will usually result in an
unexpected---and undesired---behavior: Consider the two trees from
Fig.~\ref{fig:two-trees} together with the cost function
\begin{equation}\label{eq:exmpl_metric}
  \delta(Y_1,Y_2) = \abs{Y_1 \cup Y_2} - \abs{Y_1 \cap Y_2} = \abs{Y_1
    \symmdiff Y_2}
\end{equation}
which is the cardinality of the \emph{symmetric difference} $Y_1 \symmdiff
Y_2$ of $Y_1,Y_2$.  In addition, we define $\delta(Y,-) = \delta(-,Y) =
\abs{Y}$.  We note that $\delta$ is a metric.  One can easily see that the
matching with minimum cost matches clade $\{1,\dots,j\}$ from $T_1$ to
$\{2,\dots,j\}$ from $T_2$ for all $j=3,\dots,10$.  But in addition, clade
$\{1,2\}$ from $T_1$ is matched to clade $\{1,10\}$ from $T_2$, since
\[
  \delta \bigl( \{1,2\}, \{1,10\} \bigr) = 2 < 4 = \delta \bigl(\{1,2\},-
  \bigr) + \delta \bigl(-,\{1,10\} \bigr)\enspace.
\]
This means that the matching with minimum cost does not respect the structure
of the two trees $T_1,T_2$: Clade $\{1,2\}$ in $T_1$ is a subclade of all
$\{1,\dots,j\}$ whereas clade $\{1,10\}$ in $T_2$ is no subclade of any
$\{2,\dots,j\}$, for $j=3,\dots,10$.  To this end, clades $\{1,2\}$ and
$\{1,10\}$ should not be matched in a ``reasonable'' matching.

We say that a matching $m$ is \emph{arboreal} if no pair of matched
clades is in \emph{conflict}, that is, for any $(Y_1,Y_2), (Y_1',Y_2') \in m$, one of the three cases holds: 
\begin{enumerate}[(i)]
\item $Y_1 \subseteq Y_1'$ and $Y_2 \subseteq Y_2'$;

\item $Y_1 \supseteq Y_1'$ and $Y_2 \supseteq Y_2'$; or 

\item $Y_1 \cap Y_1' = \emptyset$ and  $Y_2 \cap Y_2' = \emptyset$. 
\end{enumerate}
This allows us to define the \emph{generalized Robinson-Foulds distance}
between $T_1,T_2$ (with respect to~$\delta$) to be the minimum cost of a
arboreal matching between $\cC_1$ and~$\cC_2$. Whereas it is straightforward
to compute a bipartite matching of minimum cost, it is less clear
how to obtain an minimum cost arboreal bipartite matching.  The formal
problem statement is as follows:

\medskip

\noindent\textbf{Minimum Cost Arboreal Bipartite Matching.}  Given two rooted
phylogenetic trees $T_1,T_2$ on $X$ and a cost function $\delta$, find a
arboreal matching between $\cC(T_1)$ and $\cC(T_2)$ of minimum cost, as
defined in~\eqref{equ:cost-matching}.

\medskip

This problem differs from the NP-complete \emph{tree-constrained
    bipartite matching problem introduced in \cite{tcbm} in that cases (i)
  and (ii) are considered infeasible in \cite{tcbm}.} Unfortunately, the
problem remains NP-complete, as we will show in Sec.~\ref{sec:complexity}.

\medskip

For arbitrary cost functions $\delta$ we cannot draw conclusions about the
resulting generalized Robinson-Foulds distance.  But in case $\delta$ is a
metric, this distance is a metric, too:

\begin{lemma} \label{lem:metric}
Given a metric $\delta$ as defined in \eqref{equ:delta}; then, the induced
generalized Robinson-Foulds distance $\dGRF$ is a metric on the set of
phylogenetic rooted trees on~$X$.
\end{lemma}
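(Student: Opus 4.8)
The plan is to verify the three metric axioms for $\dGRF$---non-negativity/identity of indiscernibles, symmetry, and the triangle inequality---by leveraging the fact that $\delta$ is already a metric. First I would establish non-negativity: since $\delta$ maps into $\Rset_{\ge 0} \cup \{\infty\}$ and every summand in \eqref{equ:cost-matching} is non-negative, the cost of any arboreal matching is non-negative, hence so is the minimum. For the identity of indiscernibles, I would argue both directions. If $T_1 = T_2$ then $\cC_1 = \cC_2$, and the ``diagonal'' matching $m = \{(Y,Y) : Y \in \cC_1\}$ is arboreal (for any two clades $Y, Y'$ of a single tree, one of the three nesting/disjointness cases holds automatically), with cost $\sum_Y \delta(Y,Y) = 0$; minimality and non-negativity then force $\dGRF(T_1,T_2) = 0$. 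Conversely, if $\dGRF(T_1,T_2) = 0$, then because $\delta(Y_1,-) > 0$ and $\delta(-,Y_2) > 0$ and $\delta(Y_1,Y_2) = 0$ only when $Y_1 = Y_2$ (metric property), a zero-cost matching must match every clade of $T_1$ to an identical clade of $T_2$ and leave nothing unmatched, forcing $\cC_1 = \cC_2$ and hence $T_1 = T_2$.

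Symmetry is immediate from the symmetry of $\delta$: the map $(Y_1,Y_2) \mapsto (Y_2,Y_1)$ is a cost-preserving bijection between arboreal matchings from $T_1$ to $T_2$ and arboreal matchings from $T_2$ to $T_1$, since the arboreal conditions (i)--(iii) are symmetric in the two coordinates. Thus the two minima coincide.

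The triangle inequality is the main obstacle. Given three trees $T_1, T_2, T_3$, I would take an optimal arboreal matching $m_{12}$ between $\cC_1, \cC_2$ and an optimal arboreal matching $m_{23}$ between $\cC_2, \cC_3$, and compose them into a candidate matching $m_{13}$ between $\cC_1, \cC_3$. The natural composition matches $Y_1 \in \cC_1$ to $Y_3 \in \cC_3$ whenever there is a common $Y_2 \in \cC_2$ with $(Y_1, Y_2) \in m_{12}$ and $(Y_2, Y_3) \in m_{23}$. I would need to check two things. First, that $m_{13}$ is itself a valid (arboreal) matching: the matching property is inherited from the functional nature of $m_{12}, m_{23}$, and I expect the arboreal conditions to be preserved under composition because the three cases (nesting up, nesting down, disjoint) are ``transitive'' in an appropriate sense---though the disjoint case requires care, since $Y_1 \cap Y_1' = \emptyset$ together with relations through $T_2$ should propagate to $Y_3 \cap Y_3' = \emptyset$. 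Second, I would bound the cost $d(m_{13}) \le d(m_{12}) + d(m_{23})$ by accounting term-by-term: for a clade matched through a common intermediary, the triangle inequality for $\delta$ gives $\delta(Y_1,Y_3) \le \delta(Y_1,Y_2) + \delta(Y_2,Y_3)$; for clades left unmatched at some stage, I would use $\delta(Y_1,-) \le \delta(Y_1,Y_2) + \delta(Y_2,-)$ and similar gap-triangle inequalities, treating the gap symbol consistently as a point of the metric space at distance $\delta(\cdot,-)$.

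The delicate point to watch is that composition can create ``collisions'': two distinct clades $Y_1, Y_1'$ might both map to the same $Y_3$ through different intermediaries, or the intermediary $Y_2$ might be unmatched on one side. I would resolve this by interpreting the gap symbol $-$ as an auxiliary element and treating each of $m_{12}, m_{23}$ as a partial map where unmatched clades pair with $-$; composing these partial maps naturally yields pairings of the form $(Y_1,-)$, $(-,Y_3)$, or $(Y_1,Y_3)$, and the per-clade accounting above shows the total cost telescopes correctly. The one subtlety is ensuring the composed object $m_{13}$ really is arboreal and not merely a low-cost matching; if composition failed to preserve the arboreal property, one would have to argue that it can be repaired without increasing cost, but I expect conditions (i)--(iii) to be genuinely transitive so that no repair is needed, which is what makes the triangle inequality go through cleanly.
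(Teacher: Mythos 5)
Your proposal is correct and follows essentially the same route as the paper, which defers the details but identifies as the central point exactly your key step: that the composition of two arboreal matchings is again arboreal (the disjointness case resolving because clades are non-empty), after which the triangle inequality for $\delta$, applied with the gap symbol treated as a point of the metric space, gives the term-by-term cost bound.
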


For the proof, the central point is that the combination of two arboreal
matchings is also a arboreal matching; we defer the details to the full
version of this paper.

\subsection{The Jaccard-Robinson-Foulds metric}

Up to this point, we have assumed that $\delta$ can be an arbitrary metric.
Now, we suggest one particular type that, again, appears quite naturally as a
generalization of the original Robinson-Foulds metric: Namely, we will
concentrate on a measure that is motivated by the Jaccard index $J(A,B) =
\abs{A \cap B} / \abs{A \cup B}$ of two sets $A,B$.  For two clades $Y$,
$Y'$, we define the \emph{Jaccard weights of order $k$} as
\begin{equation} \label{equ:jrf}
   \delta_k(Y,Y') := 2 - 2 \cdot \left( \frac{\abs{Y \cap Y'}}{\abs{Y \cup
       Y'}} \right)^k
\end{equation}
where $k \ge 1$ is an arbitrary (usually integer) constant.  In addition, we
define $\delta_k(Y,-) = \delta_k(-,Y') = 1$ and, for completeness,
$\delta_k(\emptyset,\emptyset) = 0$.  The factor ``2'' in eq.~\eqref{equ:jrf}
is chosen to guarantee compatibility with the original Robinson-Foulds
metric.  Nye \etal~\cite{nye06novel} suggested a similar metric without the
exponent~$k$.  It is straightforward to check that \eqref{equ:jrf} defines a
metric, see \cite{deza97geometry} and the full version of this paper.  We
call the generalized Robinson-Foulds metric using $\delta_k$ from
\eqref{equ:jrf} the \emph{Jaccard-Robinson-Foulds} (JRF) metric of
\emph{order}~$k$, and denote it by $\dJRFk$. More precisely, for two trees
$T_1$, $T_2$, $\dJRFk(T_1,T_2)$ denotes the minimum cost of any matching
between $\mathcal{C}(T_1)$ and $\mathcal{C}(T_2)$, using $\delta_k$ from
\eqref{equ:jrf} in \eqref{equ:cost-matching}.

For any two trees and any $k \ge 1$ we clearly have $\dJRFk(T_1,T_2) \le
\dRF(T_1,T_2)$, as the matching of the RF metric is clearly arboreal.  For $k
\to \infty$ we reach $\delta_k(Y,Y) \to 0$ and $\delta_k(Y,Y') \to 1$ for $Y
\ne Y'$, the inverse Kronecker delta.  To this end, the JRF metric $\dJRFk$
also converges to the original Robinson-Foulds metric~$\dRF$.  Furthermore,
for any two trees $T_1,T_2$ there exists some $k'$ such that for all $k \ge
k'$, the matchings for $\dRF$ and $\dJRFk$ are ``basically identical'': All
exact clade matches will be contained in the matching of $\dJRFk$.  We defer
the details to the full version of this paper.


\section{Complexity of the problem} \label{sec:complexity}

In this section we prove hardness of the minimum arboreal matching problem,
even if $\delta$ (and thus the induced RF distance, see
Lemma~\ref{lem:metric}) is a metric.

In the following we devise a polynomial-time reduction $\tau$ from
$(3,4)$-SAT, the problem of deciding whether a Boolean formula in which every
clause is a disjunction of exactly 3 literals and ever variable occurs 4
times, has a satisfying assignment. This problem was shown to be NP-hard in
\cite{journals/dam/Dubois90}. Given a formula $\phi$ with $m$ clauses over
$n$ variables, we construct a minimum arboreal matching instance $I$ under
metric \eqref{eq:exmpl_metric}, such that $\phi$ is satisfiable if and only
if $I$ admits a matching of cost $d(M_0)-10n-26m - 5\cdot 2^4 - (k+1)2^k -
q$, where $M_0$ is the empty matching.

\begin{figure}[tb]
   \centering
   \includegraphics[width=0.7\linewidth]{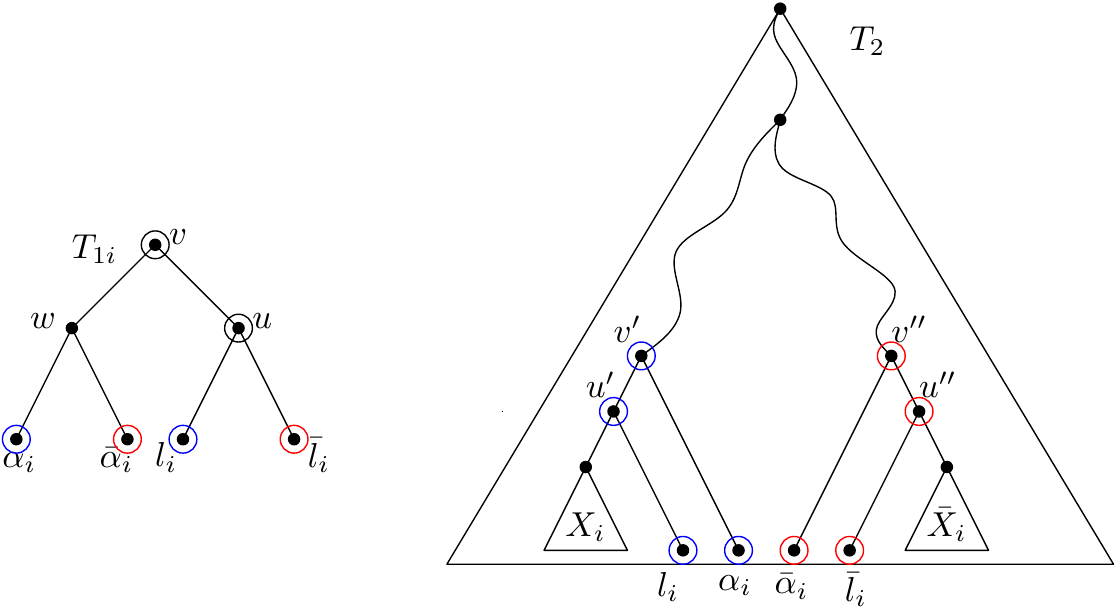}
   \caption{Variable Gadget. Vertices covered by optimal matchings $M_l$ and
     $M_{\bar{l}}$ are marked in blue and red, respectively. Vertices marked
     in black are covered in both optimal matchings.}
   \label{fig:var_gadget}
\end{figure}

For each variable $x_i$ we construct a gadget as shown in
Figure~\ref{fig:var_gadget}.  The next lemma shows that, under certain
assumptions, there are precisely two optimal solutions to the variable
gadgets.  We will use these two matchings to represent a truth assignment to
variable $x_i$.

\begin{lemma} \label{theo:var_gadget}
 Consider the gadget of a variable $x_i$ as depicted in
 Figure~\ref{fig:var_gadget}.  Under the restriction that none of the
 ancestors of nodes $v$ and $v'$ is matched, there are two optimal
 matchings of trees $T_{1i}$ and $T_{2}$ of cost $d(M_0)-10$.  $M_l$ contains
 $(v,v')$ and $(u,u')$ and matches leaves labeled $l_i$ and $\alpha_i$, and
 $M_{\bar{l}}$ contains $(v,v'')$ and $(u,u'')$ and matches leaves labeled
 $\bar{l}_i$ and $\bar{\alpha}_i$.
\end{lemma}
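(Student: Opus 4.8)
The plan is to recast the cost-minimization as a gain-maximization, which is natural for the symmetric-difference cost~\eqref{eq:exmpl_metric}. Replacing a pair of unmatched clades by a matched pair $(C_1,C_2)$ changes the cost by $\delta(C_1,C_2)-\delta(C_1,-)-\delta(-,C_2)=\abs{C_1 \symmdiff C_2}-\abs{C_1}-\abs{C_2}=-2\abs{C_1\cap C_2}$, since $\abs{C_1\symmdiff C_2}=\abs{C_1}+\abs{C_2}-2\abs{C_1\cap C_2}$. Hence for any matching $m$ we have $d(m)=d(M_0)-\sum_{(C_1,C_2)\in m}2\abs{C_1\cap C_2}$, and every matched pair carries the non-negative gain $2\abs{C_1\cap C_2}$ (consistent with the earlier observation that a metric $\delta$ yields non-negative edge weights). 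The lemma is therefore equivalent to the claim that, under the stated restriction, the maximum total gain of an arboreal matching is exactly $10$, and that this maximum is attained by precisely the two matchings $M_l$ and $M_{\bar l}$.

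Next I would exploit the restriction that no ancestor of $v$ or $v'$ is matched to localize the problem: any clade lying above a forbidden node is unavailable, and the remaining clades outside the gadget have empty intersection with the relevant region, so they contribute no gain. This confines attention to the clades at or below $v,u$ in $T_{1i}$ and at or below $v',v'',u',u''$ in $T_2$, a small finite family. Reading intersections directly off Figure~\ref{fig:var_gadget}, I would tabulate every candidate pair $(C_1,C_2)$ of positive gain together with its weight $2\abs{C_1\cap C_2}$, and record the compatibility relation imposed by arboreality: two matched pairs may coexist only under one of cases (i)--(iii), i.e.\ nested in the same direction in \emph{both} trees, or disjoint in \emph{both} trees. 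A direct check then shows that the selection $(v,v'),(u,u')$ together with the matches identifying the leaves $l_i$ and $\alpha_i$ is arboreal and realizes total gain $10$, and symmetrically that $(v,v''),(u,u'')$ with the matches for $\bar l_i,\bar\alpha_i$ does as well; these are exactly $M_l$ and $M_{\bar l}$.

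Finally I would establish optimality and uniqueness. Because the localized instance has only a handful of clades and the matched clades on each side form a laminar family, the maximum-gain arboreal matching can be determined by a finite case analysis: I would enumerate the maximal arboreal selections of positive-gain pairs and verify that none exceeds gain $10$. The crucial structural point is that arboreality \emph{couples} the choices across the two trees: once $(v,v')$ is chosen, matching $u$ to $u''$ instead of $u'$, or using the $\bar l_i,\bar\alpha_i$ side, forces two matched pairs into conflict, being neither consistently nested nor disjoint in both trees. Thus any gain-$10$ matching must commit entirely to the $v'$-side or the $v''$-side, giving $M_l$ or $M_{\bar l}$. The main obstacle is precisely this case analysis: one must rule out every ``crossing'' or partial selection that is locally profitable but violates one of (i)--(iii), and confirm that no combination of the smaller clades can compensate for dropping any of the pairs required to reach gain $10$. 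The reformulation into non-negative gains and the locality granted by the ancestor restriction are what keep this enumeration finite and manageable.
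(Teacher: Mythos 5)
Your proposal is correct and follows essentially the same route as the paper: the gain reformulation $d(m)=d(M_0)-\sum_{(C_1,C_2)\in m}2\abs{C_1\cap C_2}$ is exactly the paper's ``label overlap'' bookkeeping (its weight $w(C_1,C_2)=\delta(C_1,-)+\delta(-,C_2)-\delta(C_1,C_2)$), and the paper likewise proves the lemma by a finite case analysis over the gadget, checking which internal-node matches conflict with which leaf matches under conditions (i)--(iii) and concluding that committing entirely to the $v'$-side or the $v''$-side yields the optimal gain of $10$. The only difference is that the paper anchors the analysis at the all-leaves matching $M_a$ with $d(M_a)=d(M_0)-8$ and argues about deviations from it, which is a presentational rather than substantive distinction.
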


\begin{proof}
In the following was assume that none of the ancestors of $v$ and $v'$ can be
matched.  Let $M_0=\emptyset$ be the empty matching between $T_{1i}$ and
$T_{2}$, and let $M_a$ denote the matching that matches all leaves with
identical labels. Then, $M_a$ is maximal and $d(M_a)=d(M_0)-8$.  If we match
$u$ to either $u'$ or to $u''$, a feasible matching cannot match leaves
labeled $\bar{l}_i$ or leaves labeled $l_i$, respectively. Similarly,
matching $w$ to $v'$ or to $v''$ invalidates the matching of leaves labeled
$\bar{\alpha}_i$ or leaves labeled $\alpha_i$, respectively. In both cases
the overall cost remains unchanged compared to $M_a$.  If we match $v$ to
$v'$, only leaves labeled $l_i$ and $\alpha_i$ can be matched to
corresponding leaves in $T_{2}$. A feasible matching of node $w$ to any node
in $T_{2i}$ does not reduce the total cost, since none of the labels
of descendants of $v'$ contains $\alpha_i$ or
$\bar{\alpha}_i$. However, matching $u$ to $u'$ does not introduce any
conflict and further decreases the cost. The resulting matching (see
Figure~\ref{fig:var_gadget}), $M_l$, has cost $d(M_l)=d(M_0)-10$. By a
symmetric argument, a maximum matching $M_{\bar{l}}$ containing $(v,v'')$
matches $u$ to $u''$ and leaves labeled $\bar{l}_i$ and $\bar{\alpha}_i$,
with $d(M_{\bar{l}})=d(M_0)-10$.
\end{proof}

For each clause $C_j$ we construct a clause gadget as shown in
Figure~\ref{fig:clause_gadget}.

\begin{figure}[tb]
   \centering
   \includegraphics[width=\linewidth]{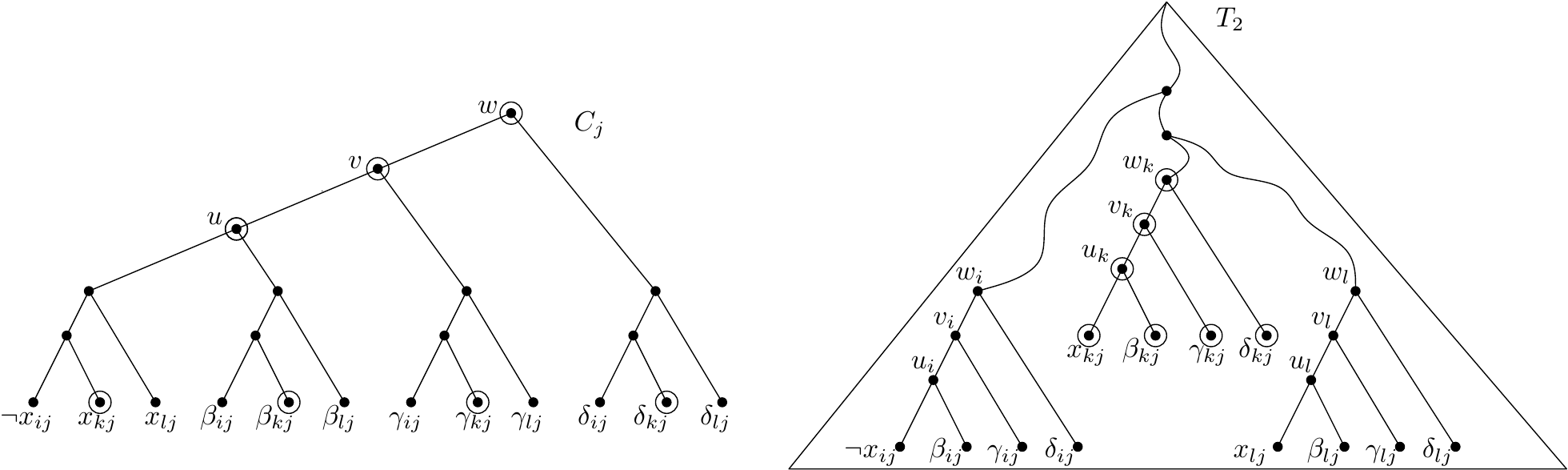}
   \caption{Clause gadget for clause $C_j=(\neg x_i \vee x_k \vee
     x_l)$. Vertices covered by an optimal matching are
     marked.}  \label{fig:clause_gadget}
\end{figure}

\begin{lemma}\label{theo:clause_gadget}
Consider the gadget of a clause $C_j$ as depicted in
Figure~\ref{fig:clause_gadget}.  Under the restriction that no common
ancestor of $w_i$, $w_k$, or $w_l$ is matched, there exists an optimal
matching $M$ of $C_j$ and $T_2$ that matches all vertices in one of the
subtrees rooted at $w_i$, $w_k$, or $w_l$ and none of the remaining vertices,
and has cost $d(M)=d(M_0)-26$.
\end{lemma}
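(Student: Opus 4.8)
The plan is to analyze the clause gadget by the same strategy used for the variable gadget in Lemma~\ref{theo:var_gadget}: start from a baseline matching, enumerate the relevant structural choices, and argue that exactly one ``type'' of optimal solution survives the arboreal (conflict-free) constraint. First I would fix the baseline $M_0=\emptyset$ and identify the maximal matching $M_a$ that matches all leaves carrying identical labels across $C_j$ and $T_2$, computing $d(M_a)$ as an offset from $d(M_0)$; this is the analogue of the $M_a$ with $d(M_a)=d(M_0)-8$ in the variable-gadget proof and establishes the ``free'' cost reduction that any feasible matching can obtain without touching the internal branching nodes $w_i,w_k,w_l$.

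Next I would examine the internal nodes. The key claim is that matching the three literal-subtree roots $w_i$, $w_k$, $w_l$ to counterparts in $T_2$ is mutually constrained by the arboreal condition: because these three subtrees hang off a common structure in $C_j$ but their natural partners in $T_2$ do \emph{not} stand in the corresponding nesting/disjointness relationship, at most one of the three can be matched ``upward'' at a time without creating a conflict of the form ruled out by cases~(i)--(iii) in the definition of arboreal. I would make this precise by checking, for each pair among $w_i,w_k,w_l$, that simultaneously matching two of them forces clades that are nested in one tree but crossing (overlapping yet neither nested nor disjoint) in the other, violating conflict-freeness. This is the heart of the argument and mirrors the variable-gadget step where matching $u$ to $u'$ excludes matching leaves labeled $\bar l_i$.

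Then I would do the cost bookkeeping for the surviving option. Choosing one subtree, say rooted at $w_i$, to match entirely while leaving the other two untouched yields an additional cost decrease beyond $M_a$; I would compute the total and verify it equals $d(M_0)-26$, confirming that committing to a single literal branch is strictly better than the partial leaf-only matching $M_a$, just as $d(M_l)=d(M_0)-10$ improved on $d(M_a)=d(M_0)-8$ in the variable case. I would also verify that matching within a single subtree introduces no internal conflict, so the claimed $M$ is genuinely feasible, and that the restriction ``no common ancestor of $w_i,w_k,w_l$ is matched'' is exactly what prevents an alternative cheaper matching from using a higher node to cover several branches at once (which would otherwise short-circuit the gadget's logic).

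The main obstacle I anticipate is the conflict analysis in the second step: verifying rigorously that \emph{any} attempt to match two of the three literal subtrees simultaneously violates the arboreal condition requires a careful case distinction over the exact clade structures in $C_j$ and $T_2$ as drawn in Figure~\ref{fig:clause_gadget}, and the precise label assignments determine which of cases~(i)--(iii) would be the one that fails. A secondary subtlety is ruling out ``mixed'' matchings that match some leaves from one subtree and some internal nodes from another; I would handle these by showing that every such hybrid either reduces to one of the enumerated cases or yields cost no better than $d(M_0)-26$, so that the claimed optimum is tight and achieved precisely by the single-subtree matchings.
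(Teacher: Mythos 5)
Your high-level strategy---baseline leaf matching $M_a$, a case analysis over which internal nodes get matched and where, and cost bookkeeping showing that committing to a single literal beats the leaf-only matching---is the same as the paper's, which indeed mirrors the proof of Lemma~\ref{theo:var_gadget} just as you intend. But two points diverge from the actual construction and would need repair. First, you place $w_i,w_k,w_l$ inside $C_j$ as ``three literal-subtree roots hanging off a common structure''; in the construction they are nodes of $T_2$ (roots of the modules $L_{\cdot j}/\bar L_{\cdot j}$, which sit in three \emph{different} variable gadgets and are therefore pairwise disjoint in $T_2$), while the clause gadget $C_j$ is a chain of nested internal nodes $u\subseteq v\subseteq w$. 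Consequently the mutual exclusion you identify as the heart of the argument does not come from a pairwise crossing analysis among three parallel subtrees of $C_j$: it comes from the fact that the nested clades at $u,v,w$ must be matched to nested clades of $T_2$ (cases (i)/(ii)), which forces all three into a single subtree of $T_2$ and in turn sacrifices the leaf matches belonging to the other two literals. Second, the paper's case analysis is organized by the highest node of $C_j$ that is matched, with explicit values: leaf-only gives $d(M_0)-24$, matching only $u$ gives $d(M_0)-20$, matching $v$ (and $u$) gives $d(M_0)-22$, and only the full commitment of $u,v,w$ plus four leaves reaches $d(M_0)-26$. These intermediate values---both \emph{worse} than the baseline---are exactly what rule out the ``mixed'' matchings you relegate to a secondary subtlety; they are essential, since the reduction needs partial commitments to be unprofitable. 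Finally, the paper closes optimality with a one-line observation (every edge of the $-26$ matching already attains the maximum possible label overlap under the stated restriction), which is much lighter than the exhaustive hybrid enumeration you anticipate. As written, your proposal defers all of this quantitative verification, which is the actual substance of the lemma.
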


\begin{proof}
Let $M_0=\emptyset$ be the empty matching between $C_j$ and $T_2$, and let
$M_a$ denote the matching that matches all leaves with identical labels. Then
$M_a$ is maximal and $d(M_a)=d(M_0)-24$.  Matching any non-leaf node below
$u$, $v$, or $w$ in $C_j$ to a node in $T_2$ that is not an ancestor of
$w_i$, $w_k$, or $w_l$, yields a matching of cost at least as high as
$d(M_a)$: At most one leaf in the subtree rooted at such a node $u'$ can be
matched to its corresponding leaf in $T_2$, while the label overlap of $u'$
with nodes in $T_2$ that are not ancestors of $w_i$, $w_k$, or $w_l$, is at
most $1$.

If node $u$ is matched to a node in $T_2$ with maximal label overlap that is
not an ancestor of $w_i$, $w_k$, or $w_l$, only $2$ leaves in the subtree
rooted at $u$ can be matched to the corresponding leaves in $T_2$. If the
remaining nodes in $T_1$ are matched according to $M_a$ the resulting
matching has cost $d(M_0)-20$.

Matching node $v$ to a node in $T_2$ with maximal overlap that is not an
ancestor of $w_i$, $w_k$, or $w_l$, allows only $3$ leaves in the subtree
rooted at $v$ to be matched to the corresponding leaves in
$T_2$. Additionally node $u$ can be matched to a node in $T_2$ with label
overlap of size $2$. Matching the remaining nodes in $C_j$ according to $M_a$
yields a matching of cost $d(M_0)-22$.

Finally, if node $w$ is matched to a node in $T_2$ with maximal label overlap
that is not an ancestor of $w_i$, $w_k$, or $w_l$, in total $4$ leaves in
$C_j$ can be matched to the corresponding leaves in $T_2$. At the same time,
$u$ and $v$ can be matched to nodes with maximal label overlap, yielding a
matching $M$ of cost $d(M)=d(M_0)-26$ (see Figure~\ref{fig:clause_gadget}).
Since all edges in $M$ have maximum label overlap
under the assumption that no common ancestor of $w_i$, $w_k$, or $w_l$ is
matched, $M$ is optimal.
\end{proof}

\begin{figure}[tb]
   \centering
   \includegraphics[width=0.3\linewidth]{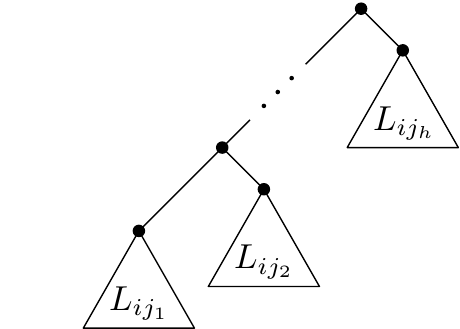}
   \caption{Module $X_i$ in the variable gadget for $x_i$ is composed of one
     tree $L_{ij}$ for each occurrence $j$ of positive literal
     $x_i$.}\label{fig:X}
 \end{figure}

Next, we show how variable and clause gadgets together form $\tau(\phi)$.
For each occurrence of a positive or negative literal $l_i$ or $\bar{l}_i$ in
a clause $j$ we denote the subtrees rooted at $w_i$, $w_k$, and $w_l$ in
$T_2$ (Figure~\ref{fig:clause_gadget}) by $L_{ij}$ or $\bar{L}_{ij}$,
respectively. $T_2$ in Figure~\ref{fig:clause_gadget} show trees
$\bar{L}_{ij}$, $L_{kj}$, and $L_{lj}$.  Let $j_1,\dots,j_h$ be the indices
of clauses in which positive literal $l_i$ occurs. Then, module $X_i$ in
Figure~\ref{fig:var_gadget} is constructed as shown in Figure~\ref{fig:X}.
Module $\bar{X}_i$ is analogously composed of trees~$\bar{L}$.

\begin{figure}[tb]
  \centering
  \includegraphics[width=0.75\linewidth]{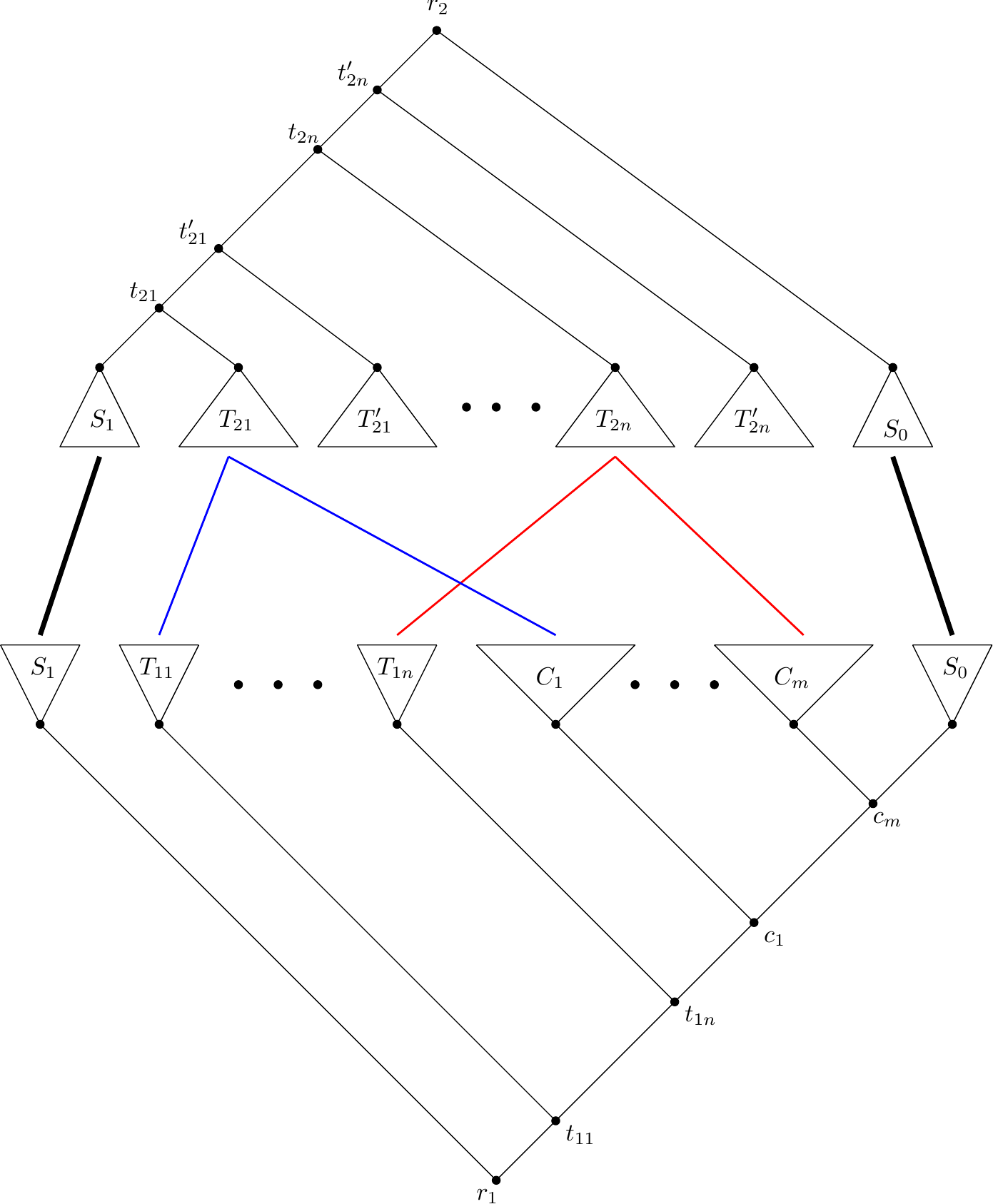}
  \caption{Trees $T_1$ with root $r_1$ and $T_2$ with root $r_2$ in instance
    $I$, obtained from $\tau(\phi)$. In an optimal solution trees $S_0$ and
    $S_1$ are fully aligned to each other (black lines).  If a variable
    gadget is in $M_l$ configuration (blue line), a clause in which the
    corresponding negative literal occurs can be matched optimally (blue
    line).  The same holds for configuration $M_{\bar{l}}$ and positive
    literal occurrences (red lines).}
  \label{fig:compl_constr}
\end{figure}

From variables gadgets (Figure~\ref{fig:var_gadget}) and clause gadgets
(Figure~\ref{fig:clause_gadget}) we construct two rooted trees $T_1$ and
$T_2$ as depicted in Figure~\ref{fig:compl_constr}, where trees $T_{2i}$ and
$\bar{T}_{2i}$ denote subtrees rooted at $v'$ and $v''$, respectively, in
$T_2$ (Figure~\ref{fig:var_gadget}). $T_1$ and $T_2$, together with cost
function \eqref{eq:exmpl_metric}, form our instance $\tau(\phi)$.  Both trees
connect subtrees of variable and clause gadgets in linear chains, augmented
by two separator trees $S_0$ and $S_1$. $S_1$ represents a complete binary
trees of depth $4$, and $S_0$ a complete binary tree of depth
$k=\lceil\log(40n^2+141)\rceil$.

We assign leaves of separator trees arbitrary but unique taxa in a way, such
that tree $S_i$ in $T_1$ is an identical copy of $S_i$ in $T_2$,
$i\in\{1,2\}$.

\begin{lemma} \label{theo:construct}
Consider the construction $\tau(\phi)$ in Figure \ref{fig:compl_constr}.  In
an optimal matching of trees $T_1$ and $T_2$, nodes in the \emph{backbone} of
$T_1$, $\mathcal{B}_1:=\{c_1,\dots,c_m, t_{11},\dots, t_{1n}\}$, and nodes in
the \emph{backbone} of $T_2$, $\mathcal{B}_2:=\{t_{22},\dots, t_{2n},
t'_{22},\dots, t'_{2n}\}$, are unmatched.
\end{lemma}

\begin{proof}
First, an optimal solution must match roots $r_1$, $r_2$, since edge
$(r_1,r_2)$ does not introduce any constraint on the remaining vertices and
has maximum label overlap.  Therefore, matching any node in $\mathcal{B}_1$
invalidates the matching of nodes in $S_0$.  According to conditions
(i)-(iii), a feasible matching cannot match nodes from different subtrees in
$\mathcal{T}_2=\{T_{21},\dots,T_{2n},T'_{21},\dots,T'_{2n},S_0,S_1\}$ to
nodes in $\mathcal{B}_1$.  Replacing all edges incident to nodes in
$\mathcal{B}_1$ by a full matching of nodes in $S_0$ reduces the cost by at
least
\begin{equation}
\begin{split}
 2\left(\sum_{u\in S_0} |Y(u)| -  \sum_{i=1}^n (|Y(t_{2i})| + |Y(t'_{2i})|) - \max_{T\in\mathcal{T}_2\setminus \{S_0\}}\sum_{v\in T} |Y(v)|\right)\\
\geq\; 2\left( (k+1)\cdot 2^k - 16 - 40n^2 - 125\right),
\end{split}
\end{equation}
where $k$ is the depth of $S_0$.
The upper bound of $125$ on $\sum_{v\in T_{2j}} |Y(u)|$ assumes that each variable occurs in at most $4$ clauses, and $125>\sum_{v\in S_1} |Y(u)|$.
Note that the taxa assigned to the 16 leaves of $S_1$ are contained only in $Y(r_1)$ and that for each $i$, $|Y(t_{2i})| + |Y(t'_{2i})|\leq 20$.
For the above chosen $k$ it holds $(k+1)\cdot 2^k > 40n^2+141$.

Similarly, a feasible matching cannot match nodes from different subtrees in
$\mathcal{T}_1:=\{C_1,\dots,C_m,T_{11},\dots,T_{1n},S_0,S_{1}\}$ to nodes in
$\mathcal{B}_2$.  Assume the optimal solution matches nodes in a subtree
$C_i$ to nodes in $\mathcal{B}_2$. Since every node in $\mathcal{B}_2$ is
ancestor of $S_1$, the nodes of $S_1$ are unmatched. Replacing the edges
between $C_i$ and $\mathcal{B}_2$ by a full matching of nodes in $S_1$
reduces the cost by at least
\[
  2(\sum_{u\in S_1} |Y(u)| - \sum_{v\in C_i} |Y(u)| )=2(80-59)>0,
\]
a contradiction.  An analog argument applies to matching nodes in one of the
trees $T_{1i}$ to nodes in $\mathcal{B}_2$, with $\sum_{u\in T_{1i}} Y(u) =
12 < \sum_{u\in S_1} Y(u)$.  As the optimal matching of trees $S_0$ has cost
0, matching at least one node in $S_0$ in $T_1$ to a node in $\mathcal{B}_2$
strictly increases the overall cost.
\end{proof}

Now we are ready to state the main theorem.

\begin{theorem}\label{theo:hard}
 For an instance of the \emph{minimum arboreal matching} problem with cost function \eqref{eq:exmpl_metric} and an integer $k$,
 it is NP-complete to decide whether there exists an arboreal matching of cost at most $k$.
\end{theorem}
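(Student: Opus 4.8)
The plan is to establish both membership in NP and NP-hardness, the latter through the reduction $\tau$ from $(3,4)$-SAT already developed in this section. Membership in NP is immediate: given a candidate set $m \subseteq \cC_1\times\cC_2$, one checks in polynomial time that every pair of matched clades satisfies one of the conditions (i)--(iii) and evaluates $d(m)$ from \eqref{equ:cost-matching}, so an arboreal matching is a succinct, efficiently verifiable certificate. The real content is to prove that $\phi$ is satisfiable if and only if $\tau(\phi)$ admits an arboreal matching of cost at most the target value $C^{\ast} := d(M_0)-10n-26m-5\cdot 2^4-(k+1)2^k-q$, so the theorem's threshold is instantiated to $C^{\ast}$ (here $k$ is the depth of $S_0$ as fixed in the construction).

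For the forward direction I would, from a satisfying assignment, assemble a global matching: match each separator tree $S_0$, $S_1$ to its identical copy (contributing the $(k+1)2^k$ and $5\cdot 2^4$ terms) and the root pair $(r_1,r_2)$; put each variable gadget into the configuration $M_l$ or $M_{\bar l}$ of Lemma~\ref{theo:var_gadget} prescribed by the truth value of $x_i$ (each contributing $-10$); and, for every clause $C_j$, pick one \emph{satisfied} literal and match the entire subtree rooted at the corresponding $w$ as in Lemma~\ref{theo:clause_gadget} (each contributing $-26$). The only thing requiring care is that this union is \emph{arboreal}: the sole potential conflicts are between a clause gadget's subtree match and the top match of an incident variable gadget, and choosing a satisfied literal guarantees that the corresponding module ($T_{2i}$ or $\bar T_{2i}$) is exactly the one left \emph{free} by the variable's configuration, so no conflict of type (i)--(iii) arises. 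Summing the contributions yields cost precisely $C^{\ast}$.

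For the backward direction I would start from an arboreal matching of cost at most $C^{\ast}$ and first invoke Lemma~\ref{theo:construct}: every optimal matching leaves the backbone nodes $\mathcal B_1$ and $\mathcal B_2$ unmatched. Since $C^{\ast}$ is a global lower bound (the per-region optima of the separators and of Lemmas~\ref{theo:var_gadget} and~\ref{theo:clause_gadget} sum to $C^{\ast}$), a matching of cost at most $C^{\ast}$ is optimal, so Lemma~\ref{theo:construct} applies. Leaving the backbones unmatched is exactly the ``no ancestor is matched'' hypothesis of the gadget lemmas, which \emph{decouples} the gadgets and lets me analyse each in isolation: each variable gadget must be in one of the two configurations $M_l$, $M_{\bar l}$ (read off as a truth assignment), and each clause gadget must fully cover one literal subtree. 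The arboreal condition then forces the variable incident to that covered subtree to be in the configuration that leaves the subtree free, i.e.\ that sets the chosen literal \emph{true}; hence every clause has a satisfied literal and $\phi$ is satisfiable.

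I expect the main obstacle to be the tight, simultaneous cost accounting coupled with the conflict argument linking the two gadget families. Concretely, the delicate point is to show that conditions (i)--(iii) \emph{precisely} encode the mutual exclusion ``$x_i$ in configuration $M_l$ $\Leftrightarrow$ module $T_{2i}$ is blocked for every clause containing $x_i$ positively'': matching a clade of a clause gadget (a disjoint backbone position in $T_1$) to a subtree $L_{ij}\subseteq T_{2i}$ must conflict with the variable gadget's match $(v,v')$ when $v'$ roots $T_{2i}$, because the $T_1$-sides are disjoint while the $T_2$-sides are nested, violating all three cases. The companion difficulty is certifying that $C^{\ast}$ is genuinely a lower bound, so that no matching ignoring the gadget structure can beat it; this is where the separator trees $S_0$, $S_1$ and the inequalities of Lemma~\ref{theo:construct} make every deviation strictly more expensive, and where the exact depths ($4$ for $S_1$ and $k=\lceil\log(40n^2+141)\rceil$ for $S_0$) are calibrated to make the accounting close.
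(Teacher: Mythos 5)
Your proposal follows the paper's own proof essentially step for step: the same reduction $\tau$ from $(3,4)$-SAT, the same target cost $d(M_0)-10n-26m-5\cdot 2^4-(k+1)2^k-q$, the same forward construction from a satisfying assignment via Lemmas~\ref{theo:var_gadget} and~\ref{theo:clause_gadget}, and the same use of Lemma~\ref{theo:construct} to decouple the gadgets in the backward direction. The only addition is your explicit (and correct) check of membership in NP, which the paper leaves implicit.
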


\begin{proof}
 First, we show that if $\phi$ is satisfiable, then $\tau(\phi)$ admits a matching $M$ of cost $d(M_0)-10n-26m - 5\cdot 2^4 - (k+1)2^k - q$,  
 where $k$ is the depth of tree $S_0$ and $q$ is total number of leaves of tree $T_1$ or, equivalently, tree $T_2$. For this, let $\nu$ be a satisfying assignment for $\phi$. We start from $M=\emptyset$.
 For each variable $x_i$ we set the corresponding variable gadget to configuration $M_l$ if $\nu(x_i)=\mbox{false}$ and to configuration $M_{\bar{l}}$ if $\nu(x_i)=\mbox{true}$, each having cost
 $d(M_0)-10$ (Lemma~\ref{theo:var_gadget}). Additionally, we match each subtree representing a clause $C_j$ to subtree $T_{2i}$ or $T'_{2i}$ following the construction in Lemma~\ref{theo:clause_gadget},
 where literal $x_i$ or $\neg x_i$, respectively, is contained in $C_j$ and evaluates to true under the assignment $\nu$. Note that none of the ancestors of subtree $X_i$ or $\bar{X}_i$ 
 (see Figure~\ref{fig:var_gadget}), respectively,
 is matched in this case (Lemma~\ref{theo:var_gadget} and Lemma~\ref{theo:construct}). Each clause therefore contributes $d(M_0)-26$ to the overall cost
 (Lemma~\ref{theo:clause_gadget}). Finally, trees $S_0$ and $S_1$ are covered
 by full matchings of their nodes and the roots $r_1$, $r_2$ are matched,
 yielding a matching of total cost 
\begin{equation}\label{eq:optcost}
 d(M_0)-10n-26m - 5\cdot 2^4 - (k+1)2^k - q
\end{equation}
As an optimal solution matches roots $r_1$ and $r_2$ but none of the nodes in $\mathcal{B}_1$ or $\mathcal{B}_2$ (Lemma~\ref{theo:construct}), any optimal matching must match subtrees in 
 $\mathcal{T}_1:=\{C_1,\dots,C_m,T_{11},\dots,T_{1n},S_0,S_{1}\}$ and $\mathcal{T}_2=\{T_{21},\dots,T_{2n},T'_{21},\dots,T'_{2n},S_0,S_1\}$ optimally. Since an optimal matching of any tree in $\mathcal{T}_1$
 to $T_2$ and vice versa is given by Lemmas~\ref{theo:var_gadget} and~\ref{theo:clause_gadget}, one can always derive a satisfying assignment of $\phi$ from $M$. Therefore, if $\phi$ is not satisfiable, the
weight of a maximum matching in $\tau(\phi)$ is strictly larger than \eqref{eq:optcost}. 
\end{proof}


\section{An Integer Linear Program} \label{sec:ilp}

In this section we introduce a simple integer linear programming formulation for the problem of finding a minimum cost arboreal matching between $\mathcal{C}(T_1)$ and $\mathcal{C}(T_2)$, given
two rooted phylogenetic trees $T_1=(V_1,E_1)$, $T_2=(V_2,E_2)$, and a cost function $\delta$. We number clades $C$ in $\mathcal{C}(T_1)$ from $1$ to $|V_1|$ and clades $\bar{C}$ in $\mathcal{C}(T_1)$ from $1$ to $|V_2|$.
An indicator variable $x_{i,j}$ denotes whether $(C_i,\bar{C}_j)\in m$
($x_{i,j}=1$) or not ($x_{i,j}=0$). Set $\mathcal{I}$ contains pairs of
matched clades $\{(i,j), (k,l)\}$ that are \emph{incompatible} according to 
conditions (i)-(iii).
With $w(C_1,C_2):=\delta(C_1,-) +
\delta (-,C_2) - \delta(C_1,C_2)$ (see Section \ref{sec:arboreal_match}) a minimum cost arboreal matching is represented by the optimal solution to:
\begin{align}
\max\;&\sum_{i=1}^{|V_1|}\sum _{j=1}^{|V_2|} w(C_i,\bar{C}_j)x_{i,j}\\
\text{s.\,t.}\;& \sum_{j=1}^{|V_2|} x_{i,j}  \leq 1 & \forall i=1\dots |V_1|,\\
& \sum_{i=1}^{|V_1|} x_{i,j}  \leq 1 & \forall j=1\dots |V_2|,\\
 &x_{i,j}+x_{k,l}\leq 1 & \forall \{(i,j),(k,l)\} \in \mathcal{I},\\
& x_{i,j}\in\{0,1\}
\end{align}


\section{Evaluation}

We use a real-world dataset provided by Sul and Williams
\cite{sul08experimental} as part of the HashRF program.\footnote{Trees can be
  downloaded from \url{https://code.google.com/p/hashrf/}.}  It contains 1000
phylogenetic trees from a Bayesian analysis of 150 green
algae~\cite{lewis05unearthing}.  For the purpose of this comparison we
performed an all-against-all comparison of the first hundred trees in the
benchmark set as a proof-of-concept study, resulting in 5050 problem
instances.  We compute the values of the Robinson-Foulds metric as well as
the minimum arboreal matching using the Jaccard weights of order $k = 1$,
that is, the JRF metric $\dJRFone$. We limit the computation to two CPU
minutes per comparison and record the times for computing each value as well
as the best upper and lower bounds for $\dJRFone$.

\begin{figure}[tbp]
  \centering
  \includegraphics[width=\linewidth/2-6pt]{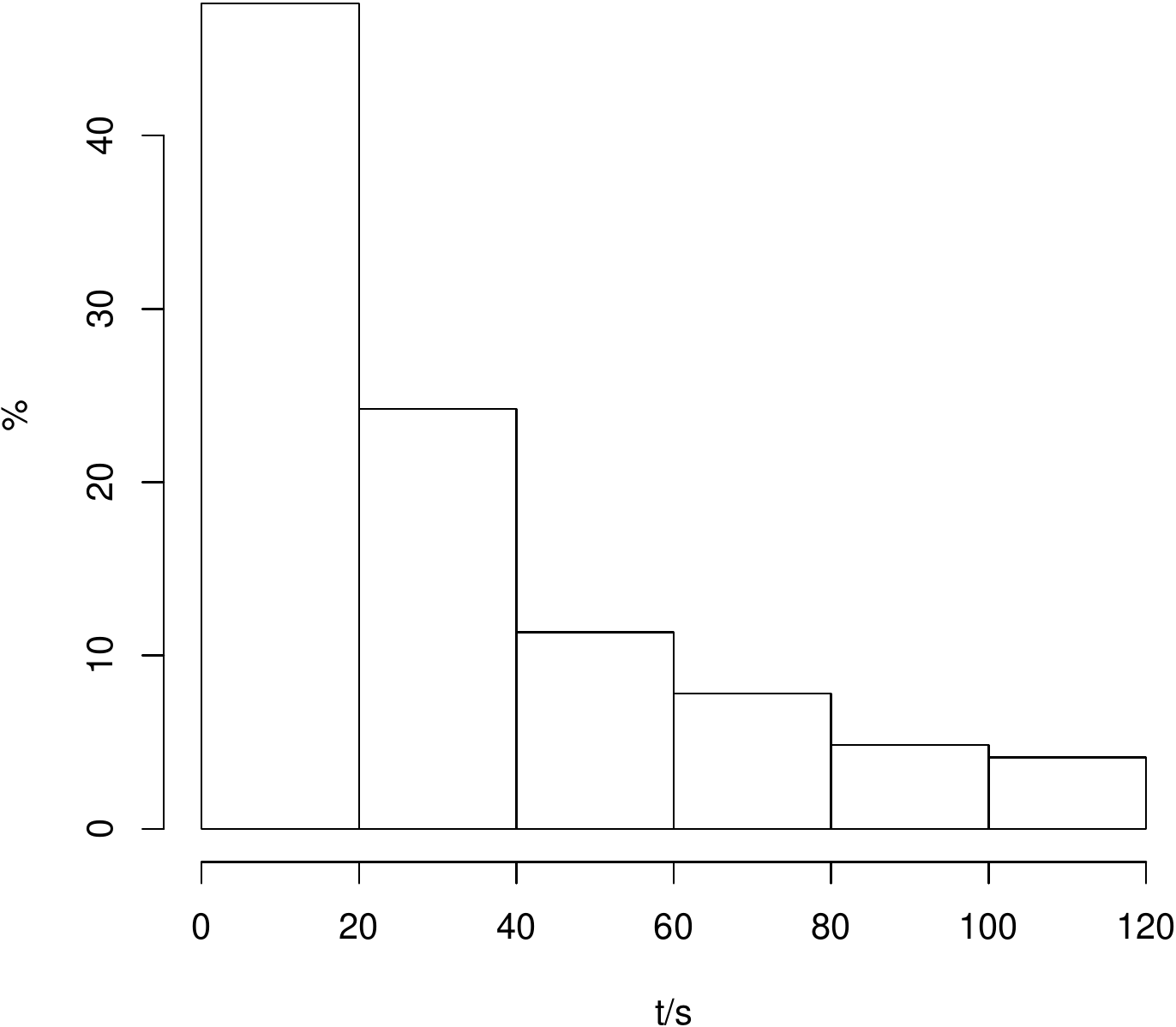}\hfill
 \includegraphics[width=\linewidth/2-6pt]{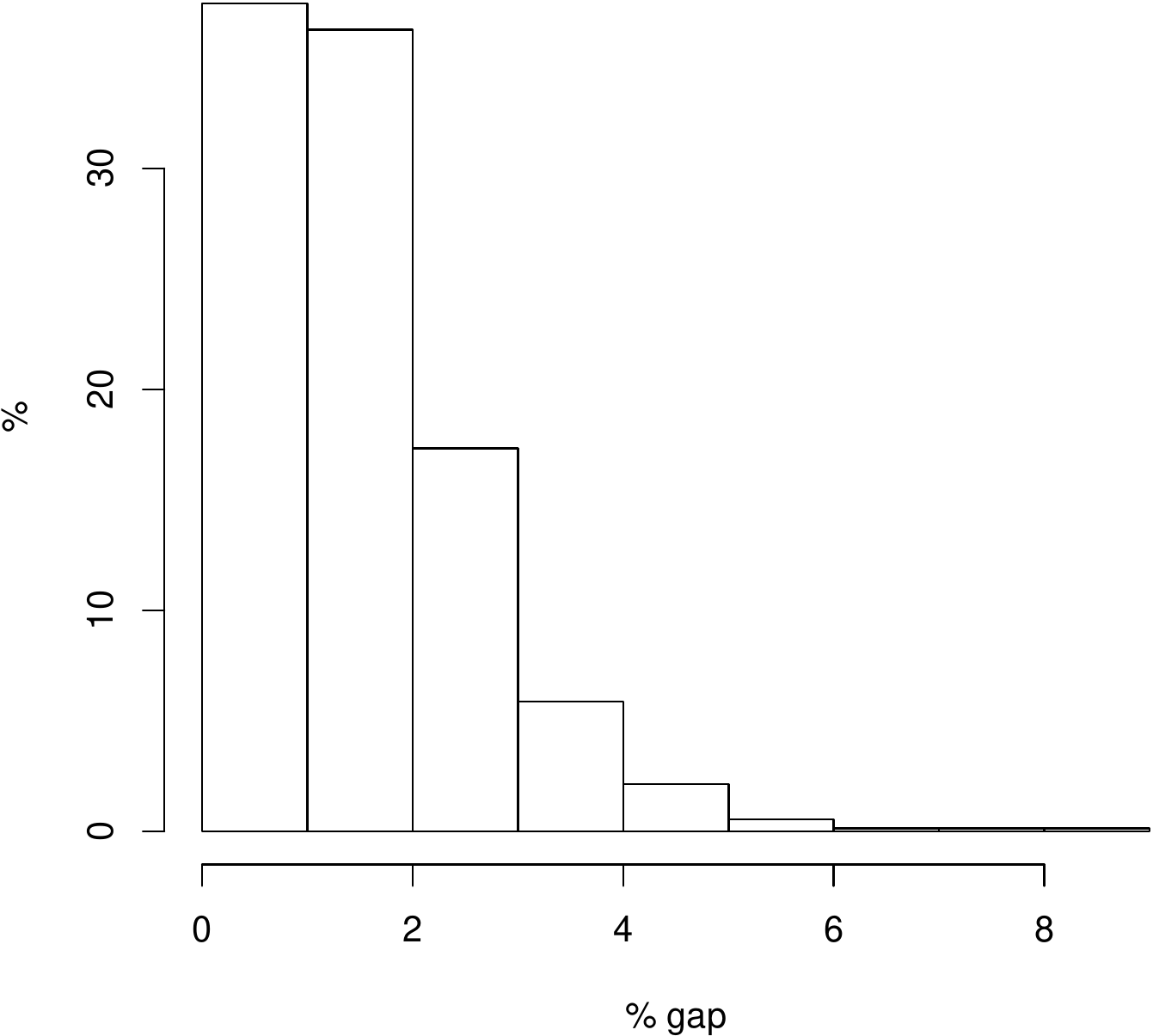}
 \caption{Running time and optimality gap statistics of the 5050
   benchmark instances. Left: histogram of running times of the 4300 instances
   that were solved to optimality within 2 CPU minutes. Right:
   histogram of the optimality gap in percent of the remaining 750 instances. This value is defined as $100
\cdot (u-l)/l$, where $u$ and $l$ are the upper and lower
bounds of the arboreal matching, respectively.}
\label{fig:histograms}
\end{figure}

From the 5050 instances, 4300 (85 \%) could be computed to optimality within
the time limit on an Intel Xeon CPU E5-2620 with 2.00 GHz\@. Most of these instances could
be solved within 40 CPU s. See Fig.~\ref{fig:histograms} (left) for a
histogram of running times. The remaining 750 instances (15 \%) were solved close
to optimality. Fig.~\ref{fig:histograms} (right) shows a histogram of
the relative optimality gap in percent. This value is defined as $100
\cdot (u-l)/l$, where $u$ and $l$ are the upper and lower
bounds of the arboreal matching, respectively. Overall, the majority
(3578 instances, 71 \%) could be solved to optimality within a
minute. Note that these results are obtained the quite
simple Integer Linear Programming formulation presented in this
paper. Improvements on the formulation will likely lead to a drastic
reduction of the running time.

Figure~\ref{fig:increasing_k} shows typical characteristics of
the arboreal JRF distances over increasing $k$ for a randomly picked
instance (tree 34 vs.\ tree 48). We observe that RF and $\dJRFk$
distances differ considerably for $k = 1$ and that $\dJRFk$ converges
quickly to RF\@ (Fig.~\ref{fig:increasing_k}, left). A similar converging behavior can be observed for the number of
matched clades (Fig.~\ref{fig:increasing_k}, right). The bottom plot
in Fig.~\ref{fig:increasing_k} illustrates the difference to
non-arboreal matchings. For $k=1$, the distances differ significantly
from the RF distance (25.8 versus 52), however, at the prize of a
large number of violations of the arboreal
property (91). As $k$ increases, the distance converges quickly to the
RF distance and the number of violations decreases. Note that zero
violations occur only when the non-arboreal distance is equal to the
RF distance. 

\begin{figure}[tbp]
  \centering
  \includegraphics[width=\linewidth/2-3pt]{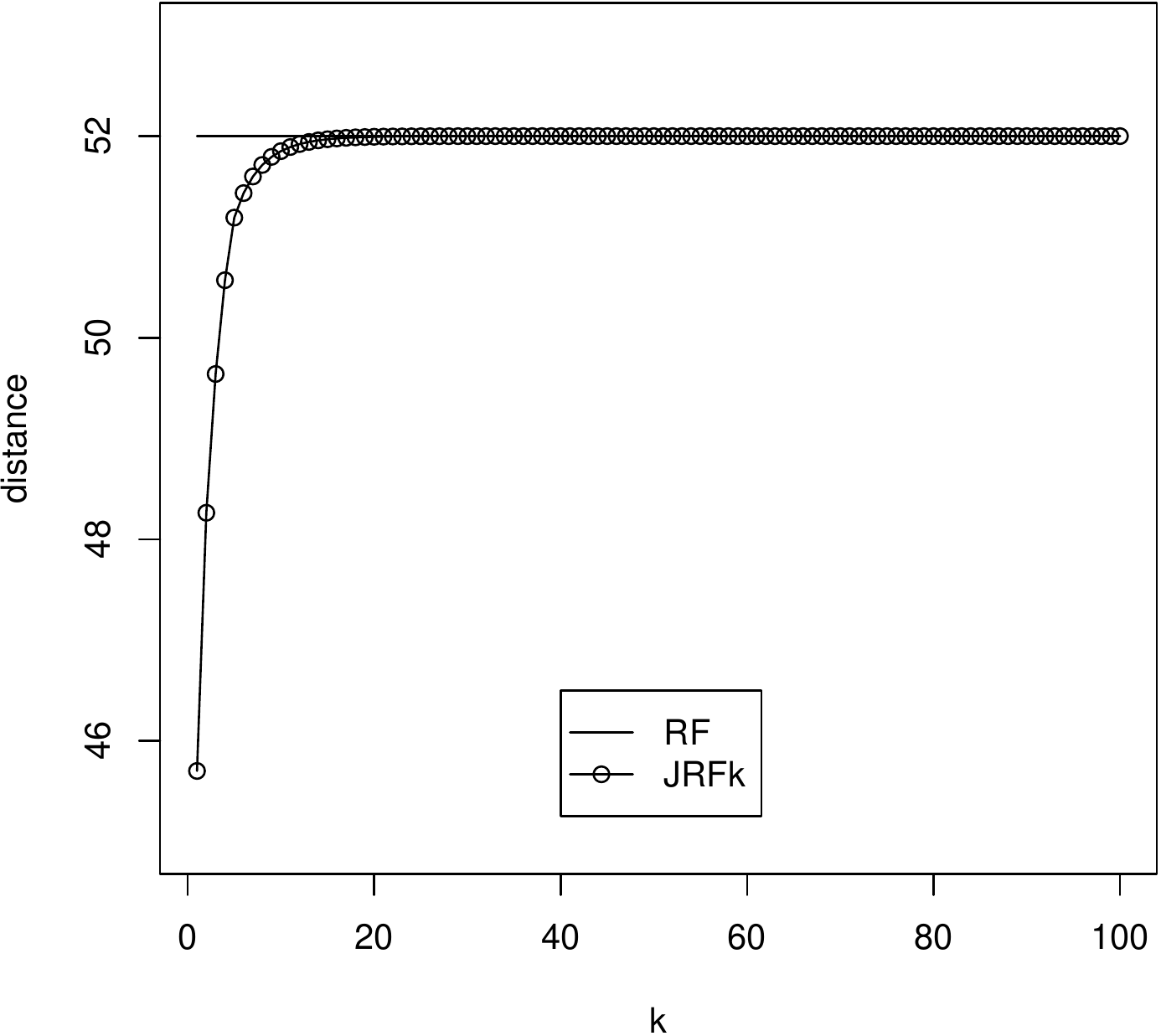}\hfill
  \includegraphics[width=\linewidth/2-3pt]{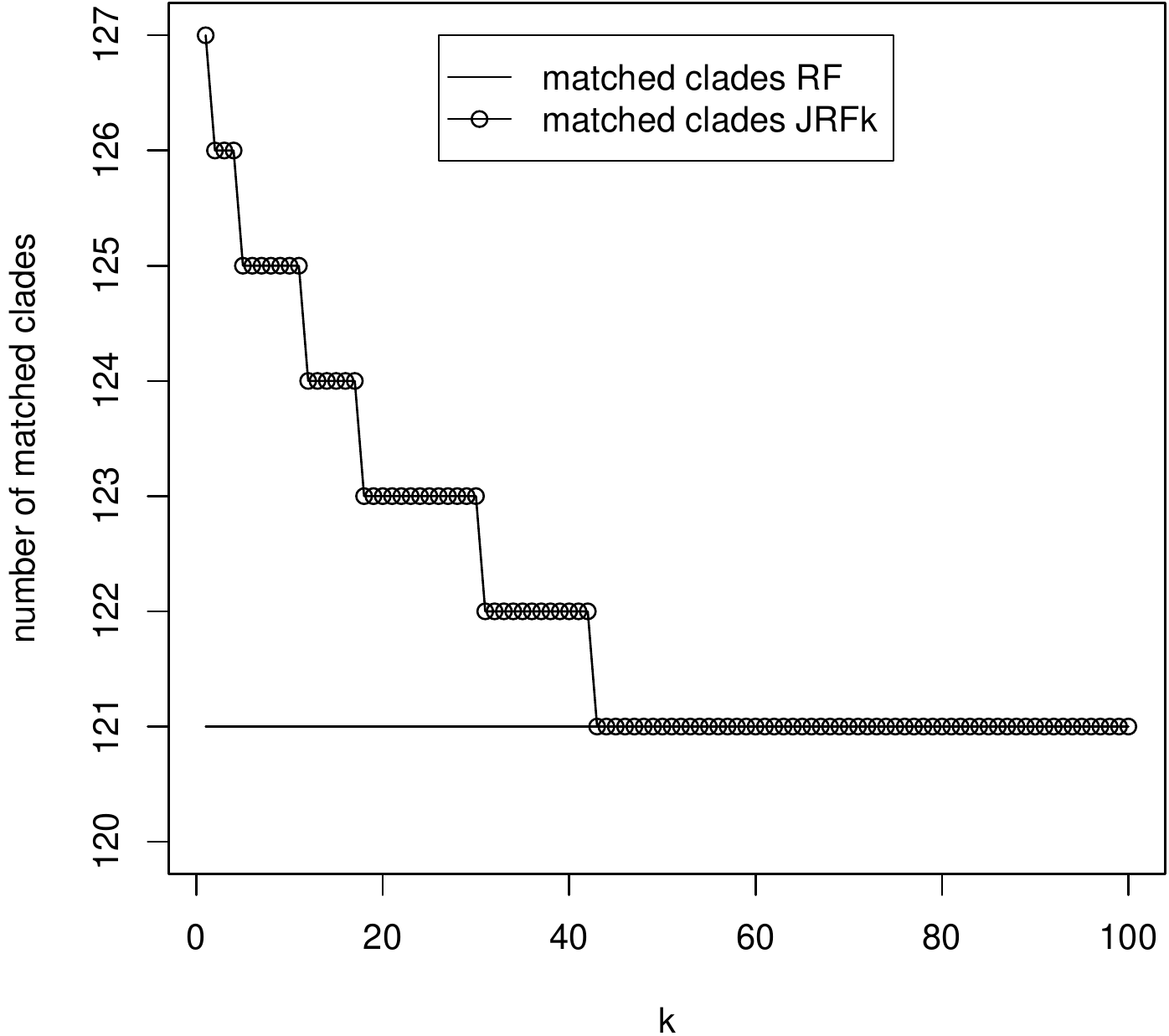}\\[1ex]
  \includegraphics[width=\linewidth/2+6pt]{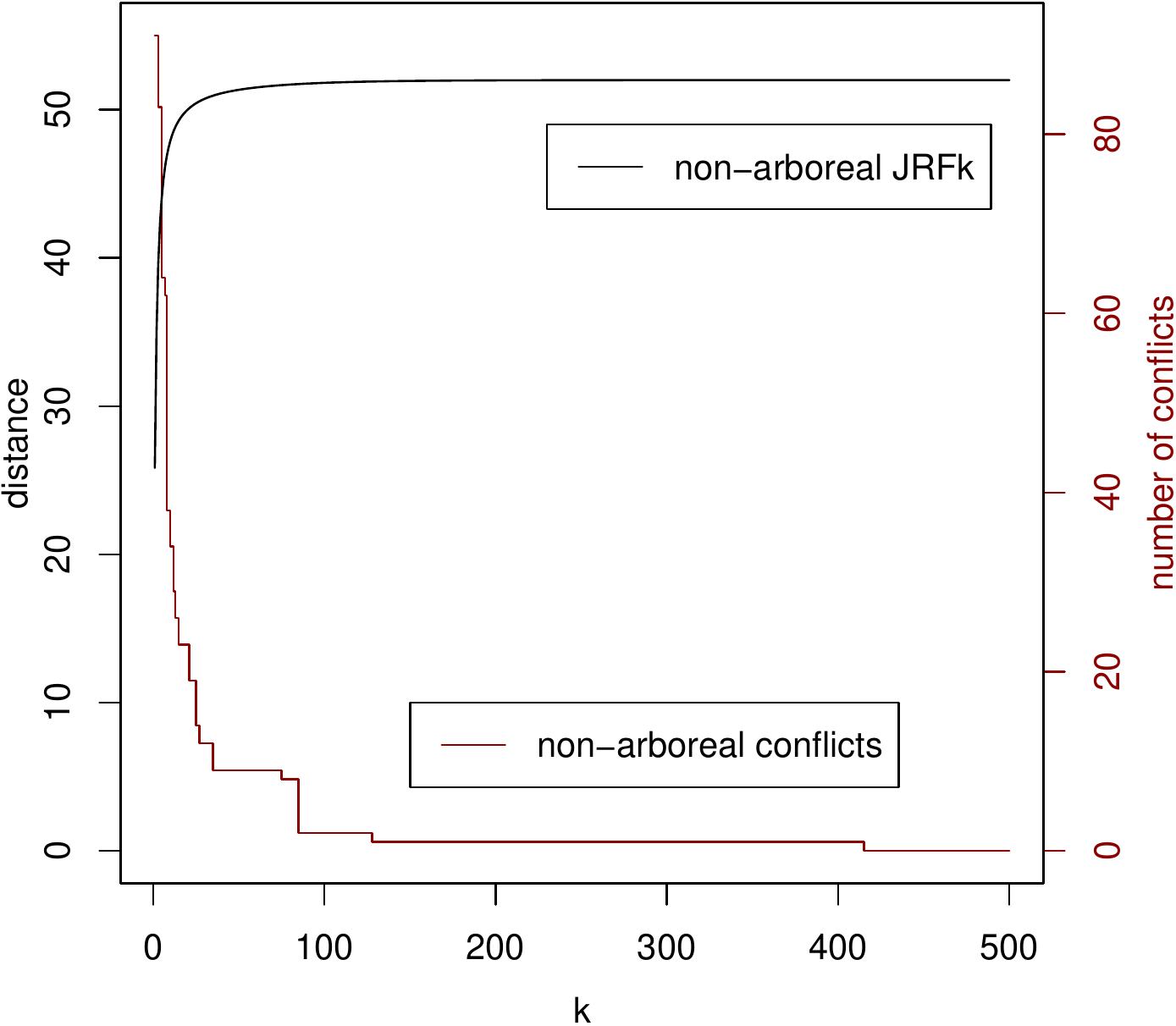}
  \caption{Typical characteristics of the distances over increasing
    $k$. In the randomly chosen example (tree 34 vs.\ tree 48), RF equals 52 and $\dJRFone$ is depicted by
    circles connected by lines (left plot). In the right plot we see
    that the number of matched clades decreases with increasing
    $k$. The plot below shows the development of distance and number
    of conflicts of the non-arboreal
    matching for increasing $k$.}
  \label{fig:increasing_k}
\end{figure}


\section{Conclusion}

We have introduced a tree metric that naturally extends the well-known
Robinson-Foulds metric.  Different from previous work, our metric is a true
generalization, as it respects the structure of the trees when comparing
clades.  Besides the theoretical amenities of such a generalization, our
methods naturally allows for a manual comparison of two trees, using the
arboreal matching that has been computed.  This allows us to compute ``best
corresponding nodes'' that respect the tree structures, and to inform the
user when other node correspondences disagree with the optimal matching.  We
believe that such a feature will be very useful for the manual comparison of two
trees, for example, in host-parasite comparison.

An open
question is the parameterized complexity of the problem, where natural
parameters are the size of the matching or, more relevant in applications,
the discrepancy between the size of the maximum arboreal matching and a
regular maximum matching.  The Maximum Independent Set problem is W[1]-hard
\cite{downey99parameterized} but, obviously, this does not imply that our
more restricted problem cannot be approached by a parameterized
algorithm~\cite{dabrowski11parameterized}.

We have come up with a
generalization that retains the advantages of the widely-used
Robinson-Foulds metric, but simultaneously overcomes
some of its shortcomings.  Our results are a first step to make the GRF and
JRF metrics applicable to practical problems.  In the future, faster
algorithms are needed for this purpose; we believe that such algorithms can
and will be developed.  Furthermore, we want to generalize our results for
unrooted trees, along the lines of~\cite[Sec.~2.1]{nye06novel}.  Here, the
main challenge lies in adapting the notion of an arboreal matching.

In the full version of this paper, we will evaluate the JRF metric following
ideas of Lin \etal~\cite{YuLin:2012dj}: That is, we will compare
distributions of distances with arboreal and non-arboreal matchings; and, we
will estimate the power of the new distance with regards to clustering
similar trees.


\paragraph{Acknowledgments.}

We thank W.\ T.\ J.\ White for helpful discussions.

\bibliographystyle{abbrv}
\bibliography{grf_wabi}

\begin{thebibliography}{10}

\bibitem{allen01subtree}
B.~L. Allen and M.~Steel.
\newblock Subtree transfer operations and their induced metrics on evolutionary
  trees.
\newblock {\em Annals Combinatorics}, 5:1--15, 2001.

\bibitem{bansal11comparing}
M.~S. Bansal, J.~Dong, and D.~Fern\'{a}ndez-Baca.
\newblock Comparing and aggregating partially resolved trees.
\newblock {\em Theor Comput Sci}, 412(48):6634--6652, 2011.

\bibitem{bogdanowicz08comparing}
D.~Bogdanowicz.
\newblock Comparing phylogenetic trees using a minimum weight perfect matching.
\newblock In {\em Proc. of Information Technology (IT 2008)}, pages 1--4, 2008.

\bibitem{bogdanowicz12matching}
D.~Bogdanowicz and K.~Giaro.
\newblock Matching split distance for unrooted binary phylogenetic trees.
\newblock {\em IEEE/ACM Trans Comput Biol Bioinformatics}, 9(1):150--160, Jan.
  2012.

\bibitem{tcbm}
S.~Canzar, K.~Elbassioni, G.~Klau, and J.~Mestre.
\newblock On tree-constrained matchings and generalizations.
\newblock {\em Algorithmica}, pages 1--22, 2013.

\bibitem{critchlow96triples}
D.~E. Critchlow, D.~K. Pearl, and C.~Qian.
\newblock The triples distance for rooted bifurcating phylogenetic trees.
\newblock {\em Syst Biol}, 45(3):323--334, 1996.

\bibitem{dabrowski11parameterized}
K.~Dabrowski, V.~V. Lozin, H.~M{\"u}ller, and D.~Rautenbach.
\newblock Parameterized algorithms for the independent set problem in some
  hereditary graph classes.
\newblock In {\em Proc. of International Workshop on Combinatorial Algorithms
  (IWOCA 2010)}, volume 6460 of {\em Lect Notes Comput Sci}, pages 1--9.
  Springer, Berlin, 2011.

\bibitem{deza97geometry}
M.~Deza and M.~Laurent.
\newblock {\em Geometry of Cuts and Metrics}.
\newblock Springer, New York, 1997.

\bibitem{downey99parameterized}
R.~G. Downey and M.~R. Fellows.
\newblock {\em Parameterized Complexity}.
\newblock Springer, Berlin, 1999.

\bibitem{journals/dam/Dubois90}
O.~Dubois.
\newblock On the r, s-{SAT} satisfiability problem and a conjecture of {Tovey}.
\newblock {\em Discrete Applied Mathematics}, 26(1):51--60, 1990.

\bibitem{finden85obtaining}
C.~Finden and A.~Gordon.
\newblock Obtaining common pruned trees.
\newblock {\em J Classif}, 2(1):255--276, 1985.

\bibitem{griebel08epos}
T.~Griebel, M.~Brinkmeyer, and S.~B\"ocker.
\newblock {EPoS}: A modular software framework for phylogenetic analysis.
\newblock {\em Bioinformatics}, 24(20):2399--2400, 2008.

\bibitem{kao01even}
M.-Y. Kao, T.~W. Lam, W.-K. Sung, and H.-F. Ting.
\newblock An even faster and more unifying algorithm for comparing trees via
  unbalanced bipartite matchings.
\newblock {\em J Algorithms}, 40(2):212--233, 2001.

\bibitem{lewis05unearthing}
L.~A. Lewis and P.~O. Lewis.
\newblock Unearthing the molecular phylodiversity of desert soil green algae
  ({Chlorophyta}).
\newblock {\em Syst Biol}, 54(6):936--947, 2005.

\bibitem{YuLin:2012dj}
Y.~Lin, V.~Rajan, and B.~M.~E. Moret.
\newblock A metric for phylogenetic trees based on matching.
\newblock {\em IEEE/ACM Trans Comput Biol Bioinformatics}, 9(4):1014--1022,
  July 2012.

\bibitem{munzner03treejuxtaposer}
T.~Munzner, F.~Guimbreti{\`e}re, S.~Tasiran, L.~Zhang, and Y.~Zhou.
\newblock {TreeJuxtaposer}: Scalable tree comparison using focus+context with
  guaranteed visibility.
\newblock {\em ACM Trans Graph}, 22(3):453--462, 2003.

\bibitem{nye06novel}
T.~M.~W. Nye, P.~Li\`o, and W.~R. Gilks.
\newblock A novel algorithm and web-based tool for comparing two alternative
  phylogenetic trees.
\newblock {\em Bioinformatics}, 22(1):117--119, 2006.

\bibitem{robinson81comparison}
D.~F. Robinson and L.~R. Foulds.
\newblock Comparison of phylogenetic trees.
\newblock {\em Math Biosci}, 53(1-2):131--147, 1981.

\bibitem{sul08experimental}
S.-J. Sul and T.~L. Williams.
\newblock An experimental analysis of {Robinson-Foulds} distance matrix
  algorithms.
\newblock In {\em Proc. of European Symposium on Algorithms (ESA 2008)}, volume
  5193 of {\em Lect Notes Comput Sci}, pages 793--804. Springer, Berlin, 2008.

\end{thebibliography}

\end{document}